\newcommand{\E}{\mathbb{E}}
\newcommand{\N}{\mathbb{N}}
\newcommand{\pr}{\mathbb{P}}
\newcommand{\Oh}{\mathcal{O}}
\newcommand{\eps}{\varepsilon}
\theoremstyle{plain}
\newtheorem{theo}{Theorem}[section]
\newtheorem{lem}[theo]{Lemma}
\newtheorem{coro}[theo]{Corollary}
\newtheorem{cl}[theo]{Claim}
\theoremstyle{definition}
\newtheorem{defi}[theo]{Definition}
\newtheorem{obs}[theo]{Observation}
\newcommand{\Dk}{\Delta_{k}}
\newcommand{\Dd}{\Delta_k^\downarrow}
\newcommand{\Du}{\Delta_k^\uparrow}
\DeclareMathOperator{\Tr}{Tr}
\DeclareMathOperator{\poly}{poly}
\begin{document}

\title{A (simple) classical algorithm for estimating Betti numbers}

\author[1]{Simon Apers}
\orcid{0000-0003-3823-6804}
\author[2]{Sander Gribling}
\orcid{0000-0002-6817-2971}
\author[3]{Sayantan Sen}
\orcid{0000-0001-5875-5252}
\author[1]{D\'aniel Szab\'o}
\orcid{0000-0002-6323-9452}
\affil[1]{Universit\'e Paris Cit\'e, CNRS, IRIF, Paris, France}
\affil[2]{Tilburg University, The Netherlands}
\affil[3]{National University of Singapore, Singapore}
\maketitle

\begin{abstract}
  We describe a simple algorithm for estimating the $k$-th normalized Betti number of a simplicial complex over $n$ elements using the path integral Monte Carlo method.
        For a general simplicial complex, the running time of our algorithm is $n^{\Oh\left(\frac{1}{\sqrt{\gamma}}\log\frac{1}{\varepsilon}\right)}$ with $\gamma$ measuring the spectral gap of the combinatorial Laplacian and $\varepsilon \in (0,1)$ the additive precision.
        In the case of a clique complex, the running time of our algorithm improves to $\left(n/\lambda_{\max}\right)^{\Oh\left(\frac{1}{\sqrt{\gamma}}\log\frac{1}{\varepsilon}\right)}$ with $\lambda_{\max} \geq k$, where $\lambda_{\max}$ is the maximum eigenvalue of the combinatorial Laplacian.
        Our algorithm provides a classical benchmark for a line of quantum algorithms for estimating Betti numbers.
        On clique complexes it matches their running time when, for example, $\gamma \in \Omega(1)$ and $k \in \Omega(n)$. 
\end{abstract}

\section{Introduction}
An abstract \emph{simplicial complex} $K$ is a family of sets or \emph{faces} that is downward closed under the subset relation.
    A canonical example of a simplicial complex is the \emph{clique complex} of a graph, where the sets correspond to the cliques in the graph.
    
    There is a recent surge of interest in using simplicial complexes to model higher-order relations in data sets -- a technique often referred to as ``\emph{topological data analysis}''~\cite{carlsson2009topology}. In these studies, an object of particular interest is the homology or Betti number of a simplicial complex, which characterizes the number of high-dimensional holes in the complex.
    Unfortunately, efficiently computing the Betti number seems like a challenging task.
    Indeed, it was recently shown that even multiplicatively approximating the Betti number of a simplicial complex or clique complex is hard even for quantum computers\footnote{Multiplicative approximation of Betti numbers is QMA1-hard \cite{crichigno2022clique}, where QMA1 is the $1$-sided error version of QMA, and QMA is the quantum analog of NP.}.
    
    Thus the next natural question is whether we can \emph{additively} approximate the Betti number.
    More formally, given a parameter $\eps \in (0,1)$, can we efficiently output an estimate $\tilde\nu_k$ of the $k$-th (normalized) Betti number of the complex satisfying
    \[
    \tilde\nu_k
    = \frac{\beta_k}{d_k} \pm \eps,
    \]
    where $d_k$ denotes the number of $k$-faces in the complex.
    To the best of our knowledge, Elek \cite{elek2010betti} was the first to study this question. In \cite{elek2010betti}, it was proved that if the complex has constant degree, that is, every $0$-face (vertex) of the complex is contained in a constant number of $1$-faces (edges), then there is an algorithm whose running time depends only on the parameter~$\eps$. The constant degree assumption however implies that the complex has constant dimension as well (that is, it contains no $k$-faces for $k \in \omega(1)$), and thus all Betti numbers $\beta_k$ for non-constant $k$ are zero.
    
    The problem was later reconsidered by Lloyd, Garnerone, and Zanardi \cite{lloyd2016quantum}, who proposed a \emph{quantum} algorithm for estimating the Betti number\footnote{More correctly, \cite{lloyd2016quantum} and follow-up works output an approximation $\tilde\beta_k = \beta_k \pm \eps d_k$ of the actual Betti number, rather than the normalized version that we consider. This however requires a (multiplicative) estimate of $d_k$. While such an estimate can be obtained efficiently if e.g.~the complex is ``dense'' (i.e., $d_k/\binom{n}{k+1} = 1/\poly(n)$), we avoid this restriction.}.
    Their algorithm combines quantum techniques such as Hamiltonian simulation and quantum phase estimation.
    Assuming that we can efficiently sample a uniformly random $k$-face from the complex, the algorithm outputs an $\eps$-additive approximation of $\beta_k$ in time
    \[
    \poly(n,1/\gamma,1/\eps),
    \]
    where $n=d_0$ denotes the number of $0$-faces in the complex, and $\gamma$ is the (normalized) spectral gap of the $k$\nobreakdash-th combinatorial Laplacian.
    As current classical algorithms for calculating Betti numbers seem to run in time $\poly(d_k)$ \cite{friedman1998computing}, which can be $\poly(n^k)$, this suggests an exponential speedup of quantum algorithms over classical ones for this problem.
    This explains the surge of interest in the quantum algorithm, and in particular in its application to clique complexes, which have a concise $\poly(n)$\nobreakdash-size description \cite{akhalwaya2022exponential,crichigno2022clique,mcardle2022streamlined,berry2022quantifying,schmidhuber2022complexity,ameneyro2022quantum,hayakawa2021quantum}.

    \paragraph{Contributions:} 
    In this work, we describe a simple classical algorithm for approximating Betti numbers. We refer to \Cref{thm:main} for a formal statement about the complexity of our algorithm, here we discuss its implications. Our algorithm provides a natural benchmark for the aforementioned quantum algorithms.
Similarly to these quantum algorithms, our algorithm is efficient if the gap and the precision are ``\emph{large}''.
However, while the quantum algorithms can afford a gap $\gamma$ and precision $\eps$ as small as $1/\poly(n)$, our algorithm requires these to be \emph{constant} for general complexes.
This is similar to the dequantization results from \cite{gharibian2022dequantizing}.
In the case of clique complexes, we can go further.
For example, if $k \in \Omega(n)$ then we can afford precision $\eps = 1/\poly(n)$ (if the gap is constant) or gap $\gamma = \Omega(1/\log^2 n)$ (if the precision is constant).
Overall, this further pins down the region where we can expect a quantum advantage for the problem of estimating Betti numbers.

In a nutshell, we base our algorithm on a random variable whose expectation is close to $\beta_k/d_k$ and whose variance is (sufficiently) small. Crucially, we show that we can efficiently generate samples from this random variable. Standard concentration bounds can be used to bound the required number of samples and hence establish the complexity of our algorithm.
    More precisely, the algorithm is based on the technique of path integral Monte Carlo~\cite{barker79}, akin to the Ulam-von Neumann algorithm for solving linear systems \cite{forsythe1950matrix}.
    Our result is formally stated below.
    By $\Gamma(\Dk)$ we denote the \emph{spectral gap} of the combinatorial Laplacian $\Dk$, which is equal to its smallest nonzero eigenvalue.
    
    \begin{theo} \label{thm:main}
        Let $\Dk$ denote the $k$-th combinatorial Laplacian of the complex.
        Assume that in time $\poly(n)$ we can {\normalfont (i)} draw a $k$-face uniformly at random, and {\normalfont (ii)} check whether a set is in the complex.
        Given an estimate $\hat\lambda \in \Theta(\lambda_{\max}(\Dk))$ and a lower bound $\gamma$ such that $\Dk$ has spectral gap $\Gamma(\Dk) \geq \gamma \hat\lambda$, there exists a classical algorithm that, for any $\eps > 0$, outputs with high probability an estimate $\tilde\nu_k = \beta_k/d_k \pm \eps$ of the $k$-th (normalized) Betti number of a general simplicial complex in time
        \[
        n^{\Oh\left( \frac{1}{\sqrt{\gamma}} \log \frac{1}{\eps} \right)},
        \]
        and of a clique complex in time
        \[
        \left(\frac{n}{\hat\lambda}\right)^{\Oh\left( \frac{1}{\sqrt{\gamma}} \log \frac{1}{\eps} \right)}
        \cdot \poly(n).
        \]
        The algorithm has space complexity $\poly(n,1/\gamma,\log(1/\eps))$.
    \end{theo}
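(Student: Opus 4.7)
The plan is to reduce estimation of $\beta_k/d_k$ to estimating the trace of a polynomial of $\Dk$, then apply path integral Monte Carlo. Since $\beta_k = \dim\ker\Dk = \Tr(\Pi)$ for the orthogonal projector $\Pi$ onto $\ker\Dk$, the normalized Betti number equals $\E_\sigma \braket{\sigma|\Pi|\sigma}$ with $\sigma$ a uniformly random $k$-face, drawable in $\poly(n)$ time by assumption~(i). Using the standard Chebyshev polynomial construction, we obtain a polynomial $p$ of degree $d = \Oh(\sqrt{1/\gamma}\log(1/\eps))$ with $p(0)=1$ and $|p(x)|\le \eps/2$ on $[\gamma,1]$. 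Setting $M := \Dk/\hat\lambda$, whose spectrum lies in $\{0\}\cup[\gamma,\Oh(1)]$, the matrix $p(M)$ approximates $\Pi$ so well that $|\Tr(p(M))/d_k - \beta_k/d_k| \le \eps/2$, reducing the problem to estimating $\E_\sigma\braket{\sigma|p(M)|\sigma}$.

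For the Monte Carlo step, expand $p(M)=\sum_{j=0}^d c_j M^j$ with $c_j$ computable in $\poly(d)$ time, and note that $(M^j)_{\sigma\sigma} = \sum_w \prod_i M_{\sigma_i\sigma_{i+1}}$, summed over closed walks $w=(\sigma_0,\dots,\sigma_j)$ with $\sigma_0=\sigma_j=\sigma$. I would sample such a walk by picking at each step a neighbor $\tau$ of $\sigma_i$ with probability proportional to $|M_{\sigma_i,\tau}|$, and output $X = \mathbbm{1}[\sigma_j=\sigma]\prod_i\|M_{\sigma_i,\cdot}\|_1\cdot\prod_i\operatorname{sign}(M_{\sigma_i,\sigma_{i+1}})$, which is an unbiased estimator of $(M^j)_{\sigma\sigma}$. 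Each walk step runs in $\poly(n)$ time: the nonzero neighbors of a $k$-face in $\Dk$ correspond to faces differing by one vertex, which can be enumerated and checked via oracles~(i)--(ii); the matrix entries of $\Dk$ (diagonal equal to a degree, off-diagonals $\pm 1$) are then computed in $\poly(n)$ time. A linear combination over $j$ yields an unbiased estimator of $\braket{\sigma|p(M)|\sigma}$, and drawing a fresh $\sigma$ per sample yields an unbiased estimator of $\Tr(p(M))/d_k$.

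The number of samples is controlled by the variance, and a Hoeffding bound requires roughly $\|M\|_1^{2d}/\eps^2$ samples. For a general complex each row of $\Dk$ has $\Oh(n)$ nonzeros of $\Oh(1)$ magnitude, giving $\|M\|_1^{2d} = n^{\Oh(d)}$ and matching the claimed $n^{\Oh(\sqrt{1/\gamma}\log(1/\eps))}$ running time. For a clique complex, the up-degree of any $k$-face is at most $n-k-1$, and a careful analysis of $\Dk = \Du + \Dd$ combined with the assumed lower bound $\hat\lambda\ge k$ yields an effective per-step bound of $\Oh(n/\hat\lambda)$ on the relevant row norms of $M$, giving the sharper $(n/\hat\lambda)^{\Oh(\sqrt{1/\gamma}\log(1/\eps))}$ bound. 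The space complexity $\poly(n,1/\gamma,\log(1/\eps))$ follows since each walk stores only $\poly(n,d)$ bits and samples are processed sequentially. The main technical obstacle is the variance analysis in the clique complex case: one must sharpen the naive row $\ell_1$-bound by exploiting cancellations between upper and lower adjacent off-diagonals of $\Dk$ and by converting the bound $\lambda_{\max}(\Dk)\ge k$ into an effective degree bound linking $\|\Dk\|_1$ to $\hat\lambda$.
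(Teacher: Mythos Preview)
Your overall strategy---approximate the kernel projector by a degree-$d=\Oh(\gamma^{-1/2}\log(1/\eps))$ Chebyshev-type polynomial and estimate each monomial trace by path-integral Monte Carlo, with Hoeffding controlling the sample count via the row-$\ell_1$ norm---is exactly the paper's. (The paper works with $H=I-\Dk/\hat\lambda$ and approximates the monomial $x^r$; you work with $M=\Dk/\hat\lambda$ and approximate the projector directly; these are equivalent under $x\mapsto 1-x$.) However, two steps need repair. First, your sparsity claim for general complexes is wrong: a row of $\Dk$ has up to $(n-k-1)(k+1)=\Theta(nk)$ nonzero off-diagonals, and the diagonal entry $d_i^{\text{up}}+k+1$ can be as large as $n$, not $\Oh(1)$. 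The bound $\|M\|_1=\Oh(n)$ you want does hold, but only because dividing by $\hat\lambda\ge\lambda_{\max}\ge k+1$ rescales the $\Theta(nk)$ row sum; you should state the argument this way. You also need to track the monomial coefficients $c_j$ of your polynomial $p$: they can be as large as $2^{\Oh(d)}$, and this factor has to be absorbed into the precision with which you estimate each $\Tr(M^j)/d_k$ (the paper does this explicitly).

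Second, and more importantly, the clique-complex case is where the nontrivial content lies, and you leave it as a black box. The cancellation between $\Du$ and $\Dd$ that you invoke (up-down neighbors contribute zero to $\Dk$) holds for \emph{every} simplicial complex and by itself only recovers the $\Theta(nk)$ bound above. The clique-specific combinatorial lemma you are missing is this: if $C$ is a $(k{+}1)$-clique and $v\notin C$, then $v$ can appear in at most one down-up neighbor of $C$ that is not also an up-down neighbor---because if $v$ could replace two distinct vertices $u_1,u_2\in C$ and still yield cliques, then $v$ is adjacent to all of $C$, so $C\cup\{v\}$ is a $(k{+}2)$-clique and those neighbors are up-down after all. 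This caps the off-diagonal nonzeros per row of $\Dk$ at $n-k-1$, whence $\|\Dk\|_1\le n$ and $\|M\|_1\le n/\hat\lambda$. The inequality $\hat\lambda\ge k$ that you highlight plays no role in this step; the improvement is entirely combinatorial and must be proved rather than deferred to ``a careful analysis.''
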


For general complexes, our algorithm improves upon the aforementioned classical algorithms if $k\in\Omega(1/\sqrt{\gamma})$. Since $n \geq \lambda_{\max}(\Dk) \geq k + \delta_k+1$ (see \Cref{lem:D-facts}), with $\delta_k$ being the maximum up-degree\footnote{The up-degree of a $k$-face is the number of $(k+1)$-faces that contain the $k$-face (see \Cref{def:up_degree}).} over all $k$-faces, we can simply set $\hat\lambda = n$ if $k \in \Omega(n)$ or if we know that $\delta_k \in \Omega(n)$.
In such case, the algorithm for clique complexes runs in time $2^{\Oh\left( \frac{1}{\sqrt{\gamma}} \log \frac{1}{\eps} \right)} \cdot \poly(n)$.
This is polynomial if either $\gamma \in \Omega(1)$ and $\eps = 1/\poly(n)$, or $\gamma \in \Omega(1/\log^2 n)$ and $\eps \in \Omega(1)$.
The algorithm provides a classical counterpart to the aforementioned line of quantum algorithms for estimating Betti numbers which, under similar assumptions, have a runtime scaling as $\poly(n,1/\gamma,1/\eps)$.
We summarize these findings in \Cref{tab:complexities}.

\begin{table}[]
    \centering
    \begin{tabular}{|c|c|c|}\hline
         \textbf{Algorithm} & \textbf{Complexes} & \textbf{Complexity is $\poly(n)$ if}  \\ \hline
         quantum algorithm of \cite{lloyd2016quantum} & general & $\gamma,\eps\in \Omega(1/\poly(n))$ \\ \hline
         this work & general & $\gamma, \eps \in \Omega(1)$  \\ \hline
         this work & clique, $k \in \Omega(n)$
         & \begin{tabular}{c} $\gamma\in \Omega(1), \eps\in \Omega(1/\poly(n))$ \\ or $\gamma\in \Omega(1/\log^2(n)), \eps \in \Omega(1)$ \end{tabular}
         \\ \hline
    \end{tabular}
    \caption{Comparison of the parameter settings of quantum and classical algorithms for the Betti number estimation problem under which their running time is polynomial.}
    \label{tab:complexities}
\end{table}

 The complexity of our algorithm for general simplicial complexes can alternatively be obtained using the singular value transformation (SVT) algorithm\footnote{More precisely, the ``\emph{dequantized quantum singular value transformation algorithm}'' in Section $4$ of \cite{gharibian2022dequantizing}.} from Gharibian and Le Gall~\cite{gharibian2022dequantizing}.
The main difference is that we use a path integral Monte Carlo approach for computing matrix powers, instead of computing them explicitly as in~\cite[Lemma 3]{gharibian2022dequantizing}.
This approach provides us with an exponential improvement in the space complexity since the SVT algorithm has space complexity $n^{\Oh\left( \frac{1}{\gamma} \log \frac{1}{\eps} \right)}$.
The more significant benefit is that we get an improved algorithm for clique complexes, which is the main case of interest for the aforementioned quantum algorithms.
We show that the $k$-th combinatorial Laplacian is $n$-sparse for clique complexes, as compared to general complexes which are $\Oh(kn)$-sparse.
This implies that it is closer to a diagonally dominant matrix, and we can exploit this for designing better algorithms using the path integral Monte Carlo technique.

\paragraph{Open problems}

A natural question is to what extent we can improve our results.
The most stringent barrier seems to come from~\cite[Theorem 6]{cade2021complexity}, who proved that Betti number estimation  for general complexes is DQC1-hard when $\eps,\gamma = 1/\poly(n)$, where DQC1 is a complexity class that is expected to be hard to simulate classically\footnote{In fact, they consider a slight generalization of the problem called ``quasi-Betti number estimation''.}.
This safeguards a quantum speedup for the case of general complexes, yet it leaves open the case of clique complexes.
Our work shows that we can get additional leverage for clique complexes.
We leave it as our main open question whether the classical complexity for clique complexes can be improved to $\poly(n,1/\gamma,1/\eps)$.

The task of estimating persistent Betti numbers has been getting an increasing amount of attention, see e.g.~\cite{persistent_spectrum, persistent_laplacian, hayakawa2021quantum}. It seems like our method can be used for solving this problem too (if we have membership query access to both complexes $K$ and $L$ with $K\subseteq L$), but we leave the formal proof of this for future work. However, note that we lose the speedup in the clique complex case, because for the persistent Laplacian it is no longer true that it can only have less than $n$ nonzero entries per row (see our \Cref{lem_clique_nonzeros}). For example, if $K$ only consists of $n$ isolated vertices, but $L$ is the clique complex corresponding to the complete graph, then for any $k$ the $k$-th persistent Laplacian of the pair $(K,L)$ has $\Omega(nk)$ nonzero entries in each row.

A final open question, as was already mentioned in earlier works \cite{berry2022quantifying}, is characterizing which complexes admit a large spectral gap.
The advantage of our algorithm, as well as the aforementioned quantum algorithms, hinges on this assumption.
We note that \cite{berry2022quantifying} discussed the complete $k$-partite graph $K(m,k)$ as an example where our spectral gap assumption on the combinatorial Laplacian holds. 
$K(m,k)$ consists of $k$ clusters, where each cluster contains $m$ vertices, and two vertices are adjacent iff they are in different clusters. The $(k-1)$-th combinatorial Laplacian of the clique complex defined by $K(m,k)$ has spectral gap $m$ and the $(k-1)$-th Betti number of the complex is $(m-1)^k$ (see \cite[Proposition 1 \& 2]{berry2022quantifying}). Observe that for $K(n/k,k)$, the spectral gap is $n/k$, and the normalized spectral gap is $\gamma=1/k$.
Thus, for this kind of complexes our algorithm runs in polynomial time if for example $\eps\in\Omega(1)$ and $k\in O\left(\log^2(n)\right)$.
    
    \paragraph*{Related work.}
    
    Upon completion of this work, we noticed that a similar classical path integral Monte Carlo algorithm for estimating Betti numbers was proposed in \cite{berry2022quantifying}.
    The authors use a Trotterization approach to implement an imaginary time evolution of the combinatorial Laplacian, and use a more complex distribution over paths to minimize the variance of the Monte Carlo estimator.
    These techniques seem more flexible than ours, and might eventually lead to a better algorithm.
    However, unless some additional conditions are imposed, the current runtime of their algorithm still shows an exponential dependency on both $k$ and $1/\eps$, which our algorithm avoids.

    As a reviewer has pointed out, a similar usage of the path integral Monte Carlo method to estimate elements of a matrix power is present in \cite{doron2017approximating} (in particular, see their Lemma 2.5).

\section{\label{sec:prelim} Notations and Preliminaries}

\subsection{Homology and Laplacians \label{sec:homology}}

In this work, we consider a simplicial complex $K$ over the set $[n]$, which corresponds to a downward closed set system over $[n]$, where $[n]=\{1, \dots, n\}$. We denote by $C_k \subseteq K$ the set of subsets $A = \{i_1,\dots,i_{k+1}\} \subseteq [n]$ of cardinality $k+1$, also called the \emph{$k$-faces} of $K$, and $d_k=|C_k|$.
The \emph{dimension} of $K$ is the largest integer $k$ such that $K$ contains at least one $k$-face, that is, $\dim(K)=\max\{k\text{ s.t. }d_k > 0\}$.
We \emph{orient} the faces by ordering the vertices in increasing order (that is, we assume $i_\ell < i_{\ell+1}$).
Once we have defined an orientation of the faces, we associate to each face a basis vector $\ket{A} \coloneqq \ket{i_1\dots i_{k+1}}$.
The \emph{boundary operator} $\partial_k:C_k \mapsto C_{k-1}$ is defined by
\[
\partial_k \ket{A}
= \sum_{\ell=1}^{k+1} (-1)^{\ell-1} \ket{A \setminus \{i_\ell\}}.
\]
From the boundary operator we can define the \emph{combinatorial Laplacian} $\Dk:C_k \mapsto C_k$ as 
\begin{gather*}
    \Dk
= \Dd + \Du,
\\ 
\text{ with } \quad
\Dd = \partial_k^\dag \partial_k
\quad \text{and} \quad
\Du = \partial_{k+1} \partial_{k+1}^\dag.
\end{gather*}

$\Dk$ is a symmetric, positive semidefinite matrix of size $d_k\times d_k$. Next we define notions of degree and neighborhood of a face.

\begin{defi} \label{def:up_degree}
    In a simplicial complex, the \emph{up-degree} of a $k$-face $i$ is the number of $(k+1)$-faces that contain $i$. It is denoted as $d^{\text{up}}_i \coloneqq |\{j\in C_{k+1} \text{ s.t. } i\subseteq j\}|$.
    The maximum up-degree among all the $k$-faces is denoted as $\delta_k = \max_{i\in C_k} d^{\text{up}}_i$.
\end{defi}

\begin{defi}[Down-up and up-down neighbors]
    Let $i, j\in C_k$ be two $k$-faces of a simplicial complex $K$. $i$ and $j$ are said to be \emph{down-up neighbors} if their symmetric difference $|i \triangle j|=2$.
    Additionally, if $i \cup j$ is a $(k+1)$-face of $K$, then $i$ and $j$ are also said to be \emph{up-down neighbors}.
\end{defi}

The following lemma from \cite{goldberg2002combinatorial} uses these notions to characterize the entries of $\Dk$.

\begin{lem}[Restatement of Laplacian Matrix Theorem, {\cite[Theorem 3.3.4]{goldberg2002combinatorial}}]\label{lem:laplacemat}
    Let $K$ be a finite oriented simplicial complex, $k$ be an integer with $0 < k \leq \mathrm{dim}(K)$, and $\{1, 2, \ldots, d_k\}$
denote the $k$-faces of $K$. Let $i, j \in [d_k]$. Then we have the following:

\begin{itemize}
    \item $(\Dk)_{ii}= d^{\text{up}}_i + k +1$.

     \item $(\Dk)_{ij} = \pm 1$ if  $i \neq j$,  and $i$ and $j$ are down-up neighbors but they are not up-down neighbors.

    \item $(\Dk)_{ij} = 0$ otherwise (i.e. if $i \neq j$, and either $i$ and $j$ are up-down neighbors or they are not down-up neighbors).
\end{itemize}
\end{lem}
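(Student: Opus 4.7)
The plan is to expand $\Dk = \Dd + \Du$ and compute the two matrix elements $(\Dd)_{ij}$ and $(\Du)_{ij}$ separately, directly from the definition of $\partial_k$ and its adjoint. The three cases of the lemma then follow by adding the two contributions and case-splitting on whether $i,j$ are down-up and/or up-down neighbors. The main obstacle will be the subcase in which $i$ and $j$ are simultaneously down-up \emph{and} up-down neighbors, which is not covered by an explicit bullet and so requires a careful sign cancellation between the two contributions.

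For the down part, expanding $\partial_k$ directly gives
\[
(\Dd)_{ij} = \langle \partial_k i \mid \partial_k j \rangle = \sum_{\ell,\ell'} (-1)^{\ell+\ell'}\, \langle i \setminus \{i_\ell\} \mid j \setminus \{j_{\ell'}\} \rangle.
\]
On the diagonal $i=j$ only the $k+1$ terms with $\ell=\ell'$ survive, each contributing $+1$, so $(\Dd)_{ii}=k+1$. For $i\neq j$, a nonzero inner product forces $i$ and $j$ to share a common $(k{-}1)$-face, i.e. $|i \triangle j|=2$, so $i,j$ must be down-up neighbors, and then exactly one pair $(\ell,\ell')$ contributes, yielding $(\Dd)_{ij}=\pm 1$. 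For the up part, I would first derive from the definition that
\[
\partial_{k+1}^\dagger \ket{i} = \sum_{v : i\cup\{v\} \in C_{k+1}} (-1)^{p(v,i)-1}\, \ket{i \cup \{v\}},
\]
where $p(v,i)$ is the position of $v$ in the ordered $(k{+}1)$-face $i\cup\{v\}$. Taking inner products of two such expansions, $(\Du)_{ij}$ is supported on $(k{+}1)$-faces containing both $i$ and $j$: on the diagonal this simply counts up-neighbors, giving $d^{\text{up}}_i$, and for $i\neq j$ it is nonzero only when $i\cup j \in C_{k+1}$, i.e. when $i,j$ are up-down neighbors, in which case a single $(k{+}1)$-face contributes $\pm 1$.

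Combining the two pieces, the diagonal entry is $k+1+d^{\text{up}}_i$ as claimed. For $i\neq j$: if $i,j$ are not down-up neighbors, both contributions vanish; if they are down-up but not up-down neighbors, only $\Dd$ contributes a $\pm 1$, matching the second bullet. The remaining case --- $i,j$ both down-up and up-down neighbors --- is where the actual work lies: both $(\Dd)_{ij}$ and $(\Du)_{ij}$ are $\pm 1$ and one must check the signs are opposite so that they cancel. Writing $i \triangle j = \{u,v\}$ with $u\in i$, $v\in j$ and (without loss of generality) $u<v$, and letting $a$ be the number of elements of $i\cap j$ less than $u$ and $b$ the number of elements of $i\cap j$ strictly between $u$ and $v$, a direct count gives positions $\ell=a+1$ for $u$ in $i$, $\ell'=a+b+1$ for $v$ in $j$, and $p=a+b+2$, $q=a+1$ for $v,u$ in the ordered face $i\cup j$. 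Hence $(\ell+\ell')-(p+q) = -1$ is odd, the two $\pm 1$ contributions carry opposite signs and cancel exactly, and we conclude $(\Dk)_{ij}=0$, completing the proof.
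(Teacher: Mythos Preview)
The paper does not prove this lemma at all; it simply cites it as a restatement of \cite[Theorem 3.3.4]{goldberg2002combinatorial}. Your proposal therefore supplies a full proof where the paper gives none, and the direct computation you outline --- expanding $(\Dd)_{ij}=\langle\partial_k i,\partial_k j\rangle$ and $(\Du)_{ij}=\langle\partial_{k+1}^\dagger i,\partial_{k+1}^\dagger j\rangle$ separately and then checking the sign cancellation in the simultaneous down-up/up-down case --- is correct and is essentially the standard argument (and the one in Goldberg's thesis). Your parity count in the cancellation case is right: with your notation $(\ell+\ell')=2a+b+2$ while $(p+q)=2a+b+3$, so the two $\pm1$ contributions indeed carry opposite signs.
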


\noindent
The following lemma gathers some useful facts about $\Dk$ that will be used in our proofs.

\begin{lem} \label{lem:D-facts}
    Let us consider a simplicial complex $K$ with $\Delta_k$ being its combinatorial Laplacian and $\delta_k$ being the maximum up-degree among all $k$-faces of $K$. Then the following results hold:   
    \begin{itemize}
        \item
        $\delta_k + k + 1
        \leq \lambda_{\max}(\Delta_k)
        \leq n
        $.
        
        \item$
        (\Dk)_{ii}
        \leq n$.
        
        \item
        $\Dk$ has at most $(n-k-1)(k+1)$ nonzero off-diagonal entries in each row, all equal to $\pm 1$ \footnote{This is tight up to a constant for general simplicial complexes, in contrast to some earlier papers \cite{gyurik2022towardsquantum,mcardle2022streamlined} that mention $\Oh(n)$ nonzero off-diagonal entries. Later in the paper (see \Cref{lem_clique_nonzeros}) we show  that for clique complexes it is actually~$\Oh(n)$, and we exploit this to design a more efficient algorithm.}.
    \end{itemize}
\end{lem}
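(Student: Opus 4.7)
The plan is to treat the three items separately. The diagonal bound of item~2 is immediate: by \Cref{lem:laplacemat}, $(\Dk)_{ii} = d^{\text{up}}_i + k + 1$, and each of the $n-k-1$ vertices outside a $k$-face $i$ can extend it to at most one $(k+1)$-face, so $d^{\text{up}}_i \leq n-k-1$ and hence $(\Dk)_{ii} \leq n$. Item~3 is a direct count from the same lemma: every nonzero off-diagonal entry is $\pm 1$, and the nonzero off-diagonal entries of row $i$ correspond to $k$-faces $j$ with $|i \triangle j| = 2$, each such $j$ parameterized by a pair $(v,w)$ with $v \in i$ and $w \notin i$, giving at most $(k+1)(n-k-1)$ candidates.

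For item~1, the lower bound is immediate because $\Dk = \partial_k^\dag \partial_k + \partial_{k+1}\partial_{k+1}^\dag$ is positive semi-definite, so
\[
\lambda_{\max}(\Dk) \geq \max_i e_i^\top \Dk\, e_i = \max_i (\Dk)_{ii} = \delta_k + k + 1.
\]
The substantive part is the upper bound $\lambda_{\max}(\Dk) \leq n$, which I would obtain by comparison to the full $(n-1)$-simplex $\Sigma$ on $[n]$ (the downward closure of $[n]$ itself). Applying \Cref{lem:laplacemat} to $\Sigma$ shows that every diagonal entry of $\Dk(\Sigma)$ equals $(k+1) + (n-k-1) = n$, and every pair of down-up neighbors in $\Sigma$ is also up-down (since $\Sigma$ contains every $(k+1)$-subset), which by the Laplacian Matrix Theorem kills all off-diagonals. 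Hence $\Dk(\Sigma) = nI$.

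To transfer this bound to $K$, let $P$ denote the coordinate projection from the $k$-chain space of $\Sigma$ onto the $k$-chain space of $K$. Since $K$ is downward closed, every $(k-1)$-subface of a $k$-face of $K$ lies in $K$, so $\Dd(K)$ is literally the principal submatrix $P \Dd(\Sigma) P^\top$. For the up-Laplacian, the difference $P \Du(\Sigma) P^\top - \Du(K)$ equals $\sum_{\sigma \in C_{k+1}(\Sigma) \setminus C_{k+1}(K)} (P \partial_{k+1} e_\sigma)(P \partial_{k+1} e_\sigma)^\top$, a sum of rank-one PSD matrices, so $\Du(K) \preceq P \Du(\Sigma) P^\top$ in the Loewner order. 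Adding the two comparisons yields $\Dk(K) \preceq P \Dk(\Sigma) P^\top = n\, I_{d_k}$, which gives $\lambda_{\max}(\Dk) \leq n$. The main obstacle is precisely this Loewner comparison for $\Du$: unlike the down-Laplacian, $\Du(K)$ is not a principal submatrix of $\Du(\Sigma)$, and one must recognize that the $(k+1)$-faces present in $\Sigma$ but missing from $K$ contribute PSD rank-one updates whose removal only decreases $\Du$ in the Loewner order.
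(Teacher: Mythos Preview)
Your argument is correct. Items~2 and~3 and the lower bound in item~1 are handled exactly as the paper handles them (the paper phrases the lower bound via the Rayleigh quotient $\|\partial_{k+1}^\dag e_i\|^2 + \|\partial_k e_i\|^2$, but this is just $(\Dk)_{ii}$, so it is the same argument as yours).

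The one genuine difference is the upper bound $\lambda_{\max}(\Dk)\le n$. The paper does not prove this directly: it invokes \cite[Proposition~6.2]{duval2002shifted} for the separate bounds $\lambda_{\max}(\Du),\lambda_{\max}(\Dd)\le n$, and then combines them via the algebraic identity $\Du\Dd=\Dd\Du=0$ (from $\partial_k\partial_{k+1}=0$), which forces $\lambda_{\max}(\Dk)=\max\{\lambda_{\max}(\Du),\lambda_{\max}(\Dd)\}$. Your route is different and fully self-contained: you compute $\Dk(\Sigma)=nI$ for the full simplex from \Cref{lem:laplacemat}, then show $\Dk(K)\preceq P\,\Dk(\Sigma)\,P^\top$ by observing that $\Dd(K)$ is literally the principal submatrix $P\Dd(\Sigma)P^\top$ (downward closure) while $P\Du(\Sigma)P^\top-\Du(K)$ is a sum of PSD rank-one terms indexed by the missing $(k{+}1)$-faces. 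This Loewner-monotonicity argument avoids the external citation and never needs the orthogonality of $\Du$ and $\Dd$; conversely, the paper's route is shorter to state and surfaces the structural fact $\Du\Dd=0$, which is independently useful.
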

\begin{proof}
    The second and third bullets follow from \Cref{lem:laplacemat}. The second inequality of the first bullet follows from \cite[Proposition 6.2]{duval2002shifted}, who prove that $\lambda_{\max}(\Dk^\uparrow),\lambda_{\max}(\Dk^\downarrow) \leq n$.
    This gives the claimed bound if we use that $\lambda_{\max}(\Dk) = \max\{\lambda_{\max}(\Dk^\uparrow),\lambda_{\max}(\Dk^\downarrow)\}$, which follows from $\Dk^\uparrow \Dk^\downarrow = \Dk^\downarrow \Dk^\uparrow = 0$.

    For proving the first inequality of the first bullet, we write the largest eigenvalue using the Rayleigh quotient:
    \begin{align*}
        \lambda_{\max}(\Delta_k)
        &= \max_{\|x\|_2=1}x^T\Dk x =\max_{\|x\|_2=1}(x^T\Du x+ x^T\Dd x) \\
        &= \max_{\|x\|_2=1}(\|\partial_{k+1}^\dag x\|_2^2+ \|\partial_k x\|_2^2).
    \end{align*}
    
    We can lower bound this by taking a particular $x$: the one that is all zero except for a position where it is one, and the latter position corresponds to a $k$-face with up-degree $\delta_k$. For this vector, $\partial_{k+1}^\dag x$ contains $\delta_k$ ones and the other elements are zero (because it has up-degree $\delta_k$). And it is also true that $\partial_k x$ contains $k+1$ ones and the other elements are zero (because every $k$-face contains $k+1$ many $(k-1)$-faces). This concludes that $\delta_k + k + 1 \leq \lambda_{\max}(\Delta_k)$.
\end{proof}

\noindent
Now let us define the central objects of our interest, namely the \emph{Betti numbers} of a simplicial complex.

\begin{defi}[Betti number]
    Let $\Delta_k$ be the $k$-th combinatorial Laplacian of a simplicial complex~$K$ with $\partial_k$ denoting the boundary operator.
    The $k$-th Betti number $\beta_k$ equals the dimension of the $k$-th homology group: $\beta_k=\dim(\ker(\partial_k) / \mathrm{Im}(\partial_{k+1}))$.
    Equivalently, $\beta_k = \dim \ker(\Dk)$, thus it is equal to the number of zero eigenvalues of $\Dk$.
\end{defi}

We will be particularly interested in the Betti numbers of \emph{clique complexes}.
The clique complex~$K$ associated to a graph $G = (V,E)$, with vertices $V = [n]$ and (undirected) edges $E$, is the family of subsets of $V$ that are cliques in $G$, that is, a subset $A\subseteq V$ is in $K$ if and only if $A$ induces a clique in $G$.

\subsection{Hoeffding's inequality}
The following concentration bound will be crucial in our proofs.

\begin{lem}[Hoeffding's Inequality~\cite{hoeffding63}, see also~\cite{dubhashi2009concentration}] \label{lem:hoeffdingineq}
    
    Let $X_1,\ldots,X_p$ be independent random variables such that $a \leq X_i \leq b$ and let $X= \frac{1}{p} \sum\limits_{i=1}^p X_i$. Then, for all $\delta >0$, 
    $$ \pr\left(|X-\E[X]| \geq \delta\right) \leq  2\exp\left(\frac{-2 p \delta^2} {(b-a)^2}\right).
    $$
\end{lem}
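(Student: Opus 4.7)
The plan is to prove this via the standard Chernoff/Cram\'er–Chernoff method combined with Hoeffding's lemma for bounded random variables. First I would reduce to a one-sided bound: by symmetry (applying the argument to $-X_i$ in place of $X_i$) it suffices to show $\pr(X - \E[X] \geq \delta) \leq \exp(-2p\delta^2/(b-a)^2)$, and then the factor of $2$ in the stated bound comes from a union bound over the two tails.

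For the one-sided bound, I would apply the exponential Markov inequality: for any $t > 0$,
\[
\pr(X - \E[X] \geq \delta)
= \pr\!\left(e^{t(X - \E[X])} \geq e^{t\delta}\right)
\leq e^{-t\delta}\, \E\!\left[e^{t(X - \E[X])}\right].
\]
Writing $X - \E[X] = \frac{1}{p}\sum_{i=1}^p (X_i - \E[X_i])$ and using independence of the $X_i$, the moment generating function factorizes:
\[
\E\!\left[e^{t(X - \E[X])}\right]
= \prod_{i=1}^p \E\!\left[e^{(t/p)(X_i - \E[X_i])}\right].
\]

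The key ingredient is Hoeffding's lemma: if $Y$ is a random variable with $\E[Y] = 0$ and $a' \leq Y \leq b'$ almost surely, then $\E[e^{sY}] \leq \exp(s^2(b'-a')^2/8)$ for all $s \in \R$. I would prove this lemma in the standard way, by using convexity of the exponential to bound
\[
e^{sy} \leq \frac{b' - y}{b' - a'} e^{sa'} + \frac{y - a'}{b' - a'} e^{sb'}
\quad \text{for } y \in [a', b'],
\]
taking expectations to get $\E[e^{sY}] \leq e^{\varphi(u)}$ where $u = s(b'-a')$ and $\varphi$ is an explicit function, and then verifying $\varphi(u) \leq u^2/8$ via a short Taylor-remainder computation (equivalently, by bounding the second derivative of the cumulant generating function). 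Applying this to the centered variable $X_i - \E[X_i]$, which lies in an interval of length $b - a$, each factor in the product is at most $\exp\!\left(\frac{t^2 (b-a)^2}{8 p^2}\right)$.

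Combining these bounds gives
\[
\pr(X - \E[X] \geq \delta)
\leq \exp\!\left(-t\delta + \frac{t^2 (b-a)^2}{8 p}\right),
\]
and I would finish by optimizing the right-hand side over $t > 0$. The quadratic in $t$ is minimized at $t^\ast = 4 p \delta / (b-a)^2$, which yields the bound $\exp\!\left(-2 p \delta^2 / (b-a)^2\right)$. Combined with the two-sided union bound, this proves the lemma. The only genuinely nontrivial step is Hoeffding's lemma itself, since it is what produces the sharp constant $2$ in the exponent; the rest is a routine Chernoff argument.
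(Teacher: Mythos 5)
The paper does not prove this lemma; it is quoted directly from \cite{hoeffding63} (see also \cite{dubhashi2009concentration}). Your argument is the standard Cram\'er--Chernoff proof via Hoeffding's lemma, which is exactly the proof in those references, and your computation is correct: the per-factor bound $\exp\bigl(t^2(b-a)^2/(8p^2)\bigr)$, the product $\exp\bigl(t^2(b-a)^2/(8p)\bigr)$, the optimizer $t^\ast = 4p\delta/(b-a)^2$, and the resulting exponent $-2p\delta^2/(b-a)^2$ all check out, so nothing further is needed.
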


\section{Algorithm for general simplicial complexes} \label{sec:general_algo}

Consider a general simplicial complex $K$ over $[n]$, with $d_k$ being the number of its $k$-faces, $\Dk$ its $k$-th combinatorial Laplacian, and with $k$-th Betti number $\beta_k$.
We wish to obtain an estimate~$\tilde\nu_k$ that satisfies $\tilde\nu_k = \beta_k/d_k \pm \eps$ for some parameter $\eps \in (0,1)$. In this section, we make the following assumptions: 
\begin{enumerate}
    \item
    In time polynomial in $n$, (a) we can check whether a set is in the complex, and (b) we can sample a $k$-face from the simplicial complex $K$ uniformly at random \footnote{We do not need assumption (b) if the complex is dense in $k$-faces.}.
    \item
    We have estimates $\hat\lambda$ and $\gamma$ on the largest eigenvalue and the spectral gap of $\Dk$, respectively, satisfying
    \[
    \lambda_{\max}(\Dk)
    \leq \hat\lambda \leq c \lambda_{\max}(\Dk)
    \quad \text{ and } \quad
    \lambda_2(\Dk) \geq \gamma \hat\lambda,
    \]
    for some constant $c > 0$ \footnote{If we do not have such a bound on the spectral gap, an alternative is to approximate the ``quasi-Betti number'', which is the number of small eigenvalues (below $\gamma \hat\lambda$) of the combinatorial Laplacian, as in \cite{cade2021complexity}.}.
\end{enumerate}

Now consider the matrix
\[
H
= I - \Dk/\hat\lambda,
\]
which, as we discuss below, satisfies $0 \preceq H \preceq I$.
From \Cref{lem:D-facts},  we know that $0\leq (\Dk)_{ii} \leq n$ and $\Dk$ has $\Oh(nk)$ nonzero off-diagonal entries in every row, each of absolute magnitude 1.
This implies that
\[
\| H \|_1
= \max_j \sum_i |H_{ij}|
\in \Oh(n k).
\]
By construction, the $k$-th combinatorial Laplacian $\Dk$ is positive semidefinite, hence all eigenvalues are non-negative, $\beta_k$ of them are equal to 0, the second smallest distinct eigenvalue is $\lambda_2(\Dk)$, and the maximum eigenvalue is $\lambda_{\max}(\Dk)$. Thus, by linearity, the eigenvalues of $H$ lie between $0$ and~$1$, $\beta_k$ of them equal to $1$, and all other eigenvalues lie below $1 - \lambda_2(\Dk)/\hat\lambda \leq 1 - \gamma$.
The following lemma shows how to relate the trace of $H^r$, for sufficiently large $r$, to the Betti number $\beta_k$.

\begin{lem} \label{obs:tr-bk}
    If $r \geq \frac{1}{\gamma} \log\frac{1}{\eps}$ then $\beta_k \leq \Tr\left( H^r \right) \leq \beta_k + \eps d_k$.
\end{lem}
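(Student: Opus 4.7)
The plan is to diagonalize $H$ and split the trace into a contribution from its top eigenspace (which exactly counts $\beta_k$) and a tail contribution from the nonzero eigenvalues of $\Dk$, which decays geometrically in $r$.

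First I would recall the spectral picture for $H = I - \Dk/\hat\lambda$. Since $\Dk$ is PSD with $\lambda_{\max}(\Dk) \leq \hat\lambda$, we have $0 \preceq H \preceq I$, so all eigenvalues of $H$ lie in $[0,1]$. By the definition of $\beta_k$ as $\dim\ker(\Dk)$, the eigenvalue $1$ of $H$ has multiplicity exactly $\beta_k$. Moreover, every other eigenvalue of $\Dk$ is at least $\lambda_2(\Dk) \geq \gamma \hat\lambda$ by the assumed spectral gap bound, so the remaining $d_k - \beta_k$ eigenvalues of $H$ lie in $[0, 1-\gamma]$.

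Next I would write the trace as a sum of $r$-th powers of the eigenvalues $\mu_1,\dots,\mu_{d_k}$ of $H$:
\[
\Tr(H^r) = \sum_{i=1}^{d_k} \mu_i^r = \beta_k + \sum_{\mu_i < 1} \mu_i^r.
\]
The lower bound $\Tr(H^r) \geq \beta_k$ is immediate since every $\mu_i \geq 0$. For the upper bound, each of the at most $d_k - \beta_k \leq d_k$ remaining eigenvalues satisfies $\mu_i^r \leq (1-\gamma)^r$, so
\[
\Tr(H^r) \leq \beta_k + d_k (1-\gamma)^r.
\]

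Finally I would plug in the choice of $r$. Using $1-\gamma \leq e^{-\gamma}$, we have $(1-\gamma)^r \leq e^{-\gamma r}$, and for $r \geq \frac{1}{\gamma}\log\frac{1}{\eps}$ this is at most $\eps$. Therefore $\Tr(H^r) \leq \beta_k + \eps d_k$, completing the proof. There is no real obstacle here: the argument is a direct spectral calculation, and the only small thing to be careful about is the standard inequality $1-\gamma \leq e^{-\gamma}$ used to convert the geometric decay into the logarithmic choice of $r$.
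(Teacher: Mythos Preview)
Your proof is correct and follows essentially the same approach as the paper: diagonalize $H$, split the trace into the contribution from the $\beta_k$ eigenvalues equal to $1$ and the remaining eigenvalues in $[0,1-\gamma]$, and bound the tail by $d_k(1-\gamma)^r \leq \eps d_k$ using $r \geq \frac{1}{\gamma}\log\frac{1}{\eps}$.
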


\begin{proof}
    On the one hand, we have that
    \[
    \Tr\left( H^r \right)
    = \sum_{i=1}^{d_k} \lambda_i(H)^r
    \geq \beta_k.
    \]
    On the other hand, we have that
    \[
    \Tr\left( H^r \right)
    = \sum_{i=1}^{d_k} \lambda_i(H)^r
    \leq \beta_k
    + \sum_{i:\lambda_i(H) < 1} (1-\gamma)^r
    \leq \beta_k + \eps d_k,
    \]
    where we used that $(1-\gamma)^r \leq \eps$ for $r \geq \frac{1}{\gamma} \log\frac{1}{\eps}$.
\end{proof}
\noindent

Using this observation, we can obtain a $2\eps$-additive estimate of $\beta_k/d_k$ from an $\eps$-additive estimate of $\Tr\left( H^r \right)/d_k$.
To obtain the latter we use another observation, that holds not only for large $r$ as above, but for a general nonnegative $z$-th power of $H$.

\begin{obs}\label{obs:traceHr}
    \[
    \frac{1}{d_k} \Tr\left( H^z \right)
    = \frac{1}{d_k} \sum_{i=1}^{d_k} \braket{i|H^z|i}
    = \E_i\left[X_z^{(i)}\right],
    \]
    where $i \in [d_k]$ is sampled uniformly at random and $X_z^{(i)} = \braket{i|H^z|i}$.
\end{obs}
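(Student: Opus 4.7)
The statement is essentially an immediate unpacking of definitions, so my plan is more a matter of identifying the two elementary facts being combined than an outline of a substantive argument.

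First, I would recall that the trace of any matrix $M$ acting on the space spanned by $\{\ket{i}\}_{i \in [d_k]}$ equals the sum of its diagonal entries in that basis, i.e.\ $\Tr(M) = \sum_{i=1}^{d_k} \braket{i|M|i}$. Applying this to $M = H^z$ immediately yields the first equality $\Tr(H^z) = \sum_{i=1}^{d_k} \braket{i|H^z|i}$, and dividing through by $d_k$ gives $\frac{1}{d_k} \Tr(H^z) = \frac{1}{d_k} \sum_{i=1}^{d_k} \braket{i|H^z|i}$.

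Second, I would observe that for a uniformly random index $i \in [d_k]$, the definition of expectation gives $\E_i[X_z^{(i)}] = \sum_{i=1}^{d_k} \frac{1}{d_k} X_z^{(i)} = \frac{1}{d_k} \sum_{i=1}^{d_k} \braket{i|H^z|i}$, since $X_z^{(i)} = \braket{i|H^z|i}$ by definition. Chaining the two identities closes the display.

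There is no real obstacle here; the observation is a restatement of the standard fact that a trace divided by the dimension equals the expected diagonal entry under uniform sampling of a basis index. The only thing worth emphasizing in the write-up is that the $k$-faces of $K$ are identified with an orthonormal basis $\{\ket{i}\}_{i \in [d_k]}$ of the space on which $\Dk$ (and hence $H$) acts, as set up in \Cref{sec:homology}, which justifies expanding the trace in this particular basis and identifying the uniform distribution over $[d_k]$ with the uniform distribution over $k$-faces needed by Assumption~1(b).
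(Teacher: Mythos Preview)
Your proposal is correct and matches the paper's treatment: the paper states this as an observation without proof, precisely because, as you note, it is an immediate unpacking of the trace formula and the definition of expectation under uniform sampling. There is nothing to add.
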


Since $H$ is $\Oh(n k) \in \Oh(n^2)$-sparse, we can evaluate $X_z^{(i)}$ exactly in time $\Oh(n^{2z})$.
Indeed this is the approach in \cite{gharibian2022dequantizing}. Here we use another approach based on the path integral Monte Carlo method, which has two advantages.
First, it improves the space complexity from $n^{\Oh(z)}$, as in \cite{gharibian2022dequantizing}, to~$\widetilde\Oh(n z)$.
Second, it will lead to a faster algorithm for clique complexes (see next section).

Let us denote the sign of $H_{i,j}$ by $(-1)^{s(i,j)}$, with $s(i,j) \in \{0,1\}$.
We can rewrite $X_z^{(i)}$ as follows:
\begin{align*}
    X_z^{(i)}
    &= \braket{i|H^z|i} \\
    &= \sum_{j_1,\dots,j_z} \braket{i|j_z} \braket{j_z|H|j_{z-1}}
    \dots \braket{j_1|H|i} \\
    &= \sum_{j_1,\dots,j_z} Y_z(i,j_1,\dots,j_z)
    \frac{|H_{j_z,j_{z-1}}|}{\|H_{\cdot,j_{z-1}}\|_1} \dots \frac{|H_{j_1,i}|}{\|H_{\cdot,i}\|_1},
\end{align*}
with
\[
Y_z(j_0,j_1,\dots,j_z)
= \braket{j_0|j_z} \prod_{\ell=0}^{z-1} (-1)^{s(j_{\ell+1},j_\ell)} \|H_{\cdot,j_\ell}\|_1.
\]
By \Cref{lem:D-facts} it holds that $|Y_z| \leq \|H\|_1^z \leq (n + n k)^z \in \Oh(n^{2z})$.
By our choice of normalization, we can interpret $|H_{j,i}|/\|H_{\cdot,i}\|_1 \eqqcolon P(i,j)$ as a transition probability from face $i$ to face $j$.
We can then say that
\[
X_z^{(i)}
= \E_{(j_0=i,j_1,\dots,j_z)}\left[ Y_z(j_0,j_1,\dots,j_z) \right],
\]
where the path $(j_0,j_1,\dots,j_z)$ is drawn with probability $P(j_0,j_1) \dots P(j_{z-1},j_z)$ from the resulting Markov chain with transition matrix $P$.
Moreover, if we choose the initial $k$-face $j_0 \in [d_k]$ uniformly at random, then
\begin{align*}
    \E\left[Y_z\right]
&=
\E_{(j_0,j_1,\dots,j_z)}\left[ Y_z(j_0,j_1,\dots,j_z) \right] \\
&= \E_{j_0}\left[X_z^{(j_0)}\right]
= \frac{1}{d_k} \Tr(H^z).
\end{align*}
This gives us an unbiased estimator $Y_z$ for the normalized trace of $H^z$.
Moreover, as proven in the following lemma, we can sample $Y_z$ efficiently.

\begin{lem} \label{lemma_sample}
    We can sample from $Y_z$, as defined above, in time $z \cdot \poly(n)$.
\end{lem}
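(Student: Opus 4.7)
The plan is to generate the path $(j_0,j_1,\dots,j_z)$ one step at a time, while maintaining a running value for $Y_z$. First I would sample $j_0 \in [d_k]$ uniformly at random using assumption~(b). Then for $\ell=0,\dots,z-1$, I would sample $j_{\ell+1}$ from the transition distribution $P(j_\ell,\cdot)$, where $P(i,j)=|H_{j,i}|/\|H_{\cdot,i}\|_1$, and multiply the running value by $(-1)^{s(j_{\ell+1},j_\ell)}\|H_{\cdot,j_\ell}\|_1$. At the very end I would multiply by the indicator $\braket{j_0|j_z}$ to obtain a sample of $Y_z$.

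The core subroutine is: given the current $k$-face $j_\ell$, enumerate the nonzero entries of the column $H_{\cdot,j_\ell}$ together with their signs in $\poly(n)$ time. From such a list one computes $\|H_{\cdot,j_\ell}\|_1$ exactly and samples $j_{\ell+1}$ from the induced categorical distribution in an additional $\poly(n)$ time. By \Cref{lem:laplacemat}, the nonzeros of the column consist of the diagonal entry $(\Dk)_{j_\ell,j_\ell}=d^{\text{up}}_{j_\ell}+k+1$ and a $\pm 1$ entry at each $k$-face $i$ that is a down-up but not an up-down neighbor of $j_\ell$. Every candidate down-up neighbor has the form $i=(j_\ell\setminus\{a\})\cup\{b\}$ for some $a\in j_\ell$ and $b\in [n]\setminus j_\ell$, giving at most $(k+1)(n-k-1)\in\poly(n)$ candidates to inspect. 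For each I would use assumption~(a) to test first whether $i\in K$ and second whether $j_\ell\cup\{b\}\in K$: the candidate contributes a $\pm 1$ off-diagonal entry precisely when the first test succeeds and the second fails. The number of $b$'s for which $j_\ell\cup\{b\}$ is a $(k+1)$-face is exactly $d^{\text{up}}_{j_\ell}$, which yields the diagonal entry for free. The sign of each off-diagonal entry is read off directly from the definition of $\partial_k$ and the chosen orientation of the faces.

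Putting the pieces together, each step of the chain runs in $\poly(n)$ time, so sampling the full path and evaluating $Y_z$ takes $z\cdot\poly(n)$ time, as claimed. No individual step is especially delicate; the main thing to get right is the ``down-up but not up-down'' condition that selects the genuinely nonzero neighbors of $j_\ell$, and this is precisely what the two membership queries above check. Everything else---tracking the running sign, the column norms, and the final indicator $\braket{j_0|j_z}$---is routine bookkeeping that can be carried out on the fly without ever materializing $H$ explicitly.
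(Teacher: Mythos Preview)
Your proposal is correct and follows essentially the same approach as the paper: sample the initial face uniformly, then at each step enumerate the $\Oh((k+1)(n-k-1))$ candidate down-up neighbors and the $n-k-1$ potential up-neighbors via membership queries, reconstruct the relevant column of $H$ using \Cref{lem:laplacemat}, and sample the next face from the induced distribution. If anything, your write-up is slightly more explicit than the paper's about the ``down-up but not up-down'' test and about how the up-degree (and hence the diagonal entry) falls out of the same enumeration.
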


\begin{proof}
    We can evaluate $Y_z$ by sampling $z$ steps of the Markov chain over $k$-faces.
    The initial $k$-face $j_0$ is drawn uniformly at random.
    By our assumptions, we can do this in time $\poly(n)$.
    Subsequent steps are sampled as follows.
    
    Let $j_i$ be the current $k$-face.
    First we learn the up-degree $d^{up}_{j_i}$, and hence $(\Dk)_{j_ij_i}$.
    We do this by, for all potential up-neighbors (obtained by adding one element to the face), querying whether they are in the complex.
    This takes $n-k-1$ queries, and hence time $\poly(n)$.
    Then we learn all down-up neighbors by querying all $\Oh(n^2)$ subsets with symmetric difference 2.
    This again takes time $\poly(n)$.
    By \Cref{lem:laplacemat} we can now derive all $\Oh(n^2)$ nonzero entries of the $j_i$-th row $H_{\cdot,j_i}$, and hence sample $j_{i+1}$ according to the probability $P(j_{i+1},j_i) = |H_{j_{i+1},j_i}|/\|H_{\cdot,j_i}\|_1$.
    This yields time $\poly(n)$ per step of the Markov chain, and so $z \cdot \poly(n)$ time overall.
\end{proof}

It remains to bound the complexity of estimating $\E[Y_z]$, given samples of $Y_z$.
For this, we use Hoeffding's inequality (\Cref{lem:hoeffdingineq}), which yields the following lemma.

\begin{lem} \label{lem:Hoeffding2}
For any $\delta > 0$ and integer $z \geq 0$, we can obtain a $\delta$-additive estimate of $\E[Y_z] = \Tr(H^z)/d_k$ by taking the average of $\Oh(n^{4z})/\delta^2$ many independent samples of $Y_z$.
\end{lem}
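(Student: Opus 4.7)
The plan is to apply Hoeffding's inequality (\Cref{lem:hoeffdingineq}) to the i.i.d.\ samples of $Y_z$. The two ingredients needed are (a) an almost-sure bound on the magnitude of a single sample, and (b) the expectation identity already established, namely $\E[Y_z] = \Tr(H^z)/d_k$.

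For the first ingredient, I would recall the deterministic bound derived just before \Cref{lemma_sample}: from the definition
\[
Y_z(j_0,j_1,\dots,j_z)
= \braket{j_0|j_z} \prod_{\ell=0}^{z-1} (-1)^{s(j_{\ell+1},j_\ell)} \|H_{\cdot,j_\ell}\|_1,
\]
we have $|Y_z| \leq \|H\|_1^z$, and by \Cref{lem:D-facts} each row/column of $H$ has one diagonal entry of magnitude at most $1$ plus at most $(n-k-1)(k+1) \in \Oh(nk)$ off-diagonal $\pm 1$ entries, giving $\|H\|_1 \in \Oh(nk) \subseteq \Oh(n^2)$. Therefore each sample lies in an interval $[a,b]$ with $b - a \in \Oh(n^{2z})$.

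For the second ingredient, we take $p$ i.i.d.\ samples $Y_z^{(1)},\dots,Y_z^{(p)}$ of $Y_z$ and form the empirical mean $\overline Y = \frac{1}{p}\sum_{i=1}^p Y_z^{(i)}$. By \Cref{lem:hoeffdingineq},
\[
\pr\bigl( |\overline Y - \E[Y_z]| \geq \delta \bigr)
\leq 2\exp\!\left( \frac{-2 p \delta^2}{(b-a)^2} \right)
\leq 2\exp\!\left( \frac{-c\,p\,\delta^2}{n^{4z}} \right)
\]
for some absolute constant $c>0$. Choosing $p = \Oh(n^{4z}/\delta^2)$ makes this probability at most, say, $1/3$, so with high probability $\overline Y$ is a $\delta$-additive estimate of $\E[Y_z] = \Tr(H^z)/d_k$.

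There is no real obstacle here: the lemma is essentially a bookkeeping step that converts the deterministic magnitude bound on $Y_z$ and the unbiasedness of the estimator into a concrete sample-complexity statement via a standard concentration inequality. The only point to be slightly careful about is that the bound $|Y_z| \in \Oh(n^{2z})$ hides a constant inside the $\Oh$ that, after squaring and plugging into Hoeffding, still produces the stated $\Oh(n^{4z}/\delta^2)$ sample bound.
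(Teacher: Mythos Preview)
Your proposal is correct and follows essentially the same route as the paper: bound $|Y_z| \leq \|H\|_1^z \in \Oh(n^{2z})$ via \Cref{lem:D-facts}, then plug into Hoeffding's inequality to get the $\Oh(n^{4z}/\delta^2)$ sample bound. One tiny slip: the off-diagonal entries of $H$ are $\pm 1/\hat\lambda$, not $\pm 1$ (the $\pm 1$ are in $\Dk$), but this only makes your bound looser, so the conclusion is unaffected.
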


\begin{proof}
    
    We know that $|Y_z| \leq \|H\|_1^z \leq (n + n k)^z \in \Oh(n^{2z})$ (\Cref{lem:D-facts}). 
    Consider $p$ independent samples $Y_{z,1}, \ldots, Y_{z,p}$ distributed according to $Y_z$.
    For any $\delta > 0$, Hoeffding's inequality (\Cref{lem:hoeffdingineq}) states that 
    \[
    \Pr\left( \left| \frac{1}{p} \sum_{i=1}^p Y_{z,i} - \E[Y_z] \right|
    \geq \delta \right)
    \leq 2 \exp\left(\frac{-2 p \delta^2}{\Oh(n^{4z})}\right).
    \]
    If we choose $p = \Oh(n^{4z})/\delta^2$ then $\frac{1}{p} \sum Y_{z,i}$ will be $\delta$-close to its expectation $\E\left[ Y_z \right] = \frac{1}{d_k} \Tr(H^z)$ with probability at least $1-1/2^{\poly(n)}$. 
    \end{proof}
    
This leads to the following algorithm, which has time complexity~$\Oh(n^{4z}/\delta^2)$.

\begin{algorithm}[h]
\caption{Algorithm for $\delta$-estimating $\Tr\left(H^z\right)/d_k = \Tr\left(\left(I-\Dk/\hat\lambda\right)^z\right)/d_k$}
\label{alg:simple}
\KwIn{Query and sample access to complex $K$, integer $k$, parameters $\hat\lambda$ and $z$, precision parameter $\delta \in (0,1)$.}
\KwOut{Estimate $\text{est}_{k,z}$ such that $\text{est}_{k,z} = \Tr(H^z)/d_k \pm \delta$ with high probability.}
\SetAlgoLined
\SetKw{Continue}{continue}

\vspace{6pt}

Set $p = \Oh(n^{4z})/\delta^2$.

\For{$t = 1,\dots,p$}{
    
    Sample a $k$-face $j_0$ of $K$ uniformly at random.\
    
    Sample $z$ steps $(j_0,j_1,\dots,j_z)$ of the Markov chain $P$ with initial face $j_0$.\
        
        Set $Y_{z,t}
        = \braket{j_0|j_z} \prod_{q=0}^{z-1} (-1)^{s(j_{q+1},j_q)} \|H_{\cdot,j_q}\|_1$.\
    }
    
Return $\text{est}_{k,z} = \frac{1}{p} \sum_{t=1}^{p} Y_{z,t}$.\

\end{algorithm}

For $r \geq \frac{1}{\gamma} \log \frac{2}{\eps}$, we know from \Cref{obs:tr-bk} that $\Tr(H^r)/d_k = \beta_k/d_k \pm \eps/2$.
Hence, setting $\delta = \eps/2$ and $z=r$ in the algorithm above we get an $\eps$-additive estimate of $\beta_k/d_k$.
The algorithm requires $p = \Oh(n^{4r}/\delta^2) = n^{\Oh\left(\frac{1}{\gamma} \log\frac{1}{\epsilon}\right)}$ samples of $Y_r$, each of which can be obtained in time $r \cdot \poly(n)$ by \Cref{lemma_sample}.
The overall time complexity of Algorithm \ref{alg:simple} is hence~$n^{\Oh\left(\frac{1}{\gamma} \log\frac{1}{\eps}\right)}$.

\subsection{Improvement using Chebyshev polynomials}

We can slightly improve this result by approximating $H^r$ with a polynomial of degree roughly $\sqrt{r}$ using Chebyshev polynomials, and then estimating the monomials using Algorithm \ref{alg:simple}.
We use the following approximation lemma.

\begin{lem}[Follows from {\cite[Theorem 3.3]{Sachdeva_Chebyshev}}] \label{lem:cheb-approx}
    For any $\delta>0$ and $d\ge \sqrt{2r\log(2/\delta)}$, the monomial $x^r$ can be approximated by a polynomial $p_{r,d}(x)$ of degree $d$ such that $|p_{r,d}(x)-x^r|\le \delta$ for all $x \in [-1,1]$.
\end{lem}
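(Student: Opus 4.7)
The plan is to instantiate the construction behind Sachdeva--Vishnoi's Theorem 3.3 directly for the monomial $x^r$, using the explicit Chebyshev expansion together with a binomial tail bound. First I would write down the classical identity
\[
x^r = 2^{-r} \sum_{j=0}^{r} \binom{r}{j}\, T_{r-2j}(x),
\]
which follows by expanding $\cos^r\theta$ as a linear combination of $\{\cos(m\theta)\}_{m}$ and using $T_m(x)=\cos(m\arccos x)$; the only care needed is combining the $j$ and $r-j$ terms (and treating the $j=r/2$ term as a single constant when $r$ is even).

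Next I would define the truncated polynomial
\[
p_{r,d}(x) \;=\; 2^{-r} \sum_{\substack{0 \le j \le r \\ |r-2j| \le d}} \binom{r}{j}\, T_{r-2j}(x),
\]
which is a polynomial of degree at most $d$ in $x$. For $x \in [-1,1]$ we have $|T_m(x)| \le 1$ for every $m$, so the truncation error is controlled by
\[
\bigl|p_{r,d}(x) - x^r\bigr|
\;\le\; 2^{-r} \sum_{|r-2j|>d} \binom{r}{j}.
\]

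The core step is then a binomial tail bound. Writing the condition $|r-2j|>d$ as $|j-r/2|>d/2$, I interpret the right-hand side as the probability that a $\mathrm{Bin}(r,1/2)$ random variable deviates from its mean $r/2$ by more than $d/2$. Hoeffding's inequality (\Cref{lem:hoeffdingineq}, applied to $r$ independent Bernoulli$(1/2)$ variables) yields
\[
2^{-r} \sum_{|j-r/2|>d/2} \binom{r}{j}
\;\le\; 2\exp\!\left(-\tfrac{2(d/2)^2}{r}\right)
\;=\; 2\exp\!\left(-\tfrac{d^2}{2r}\right).
\]
This quantity is at most $\delta$ precisely when $d \ge \sqrt{2r\log(2/\delta)}$, which matches the stated hypothesis and completes the bound uniformly on $[-1,1]$.

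The main (very mild) obstacle is really bookkeeping: ensuring the Chebyshev identity is stated correctly for both parities of $r$, and checking that one may invoke Hoeffding's inequality in this discrete symmetric form rather than via a direct but messier Stirling-type estimate on the binomial tails. Once that is in place, the bound on $|p_{r,d}(x)-x^r|$ is immediate and the degree condition $d \ge \sqrt{2r\log(2/\delta)}$ drops out of inverting the exponential.
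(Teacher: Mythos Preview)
Your argument is correct and is precisely the standard Sachdeva--Vishnoi proof that the paper is citing; in particular, the truncated Chebyshev expansion you write down coincides (after pairing the $j$ and $r-j$ terms via $T_{-m}=T_m$) with the polynomial $p_{r,d}(x)=\alpha_0^{(r)}+\sum_{i=1}^d 2\alpha_i^{(r)}T_i(x)$ with $\alpha_i^{(r)}=\binom{r}{(r-i)/2}/2^r$ that the paper uses immediately afterward when bounding the monomial coefficients. The paper itself does not supply a separate proof of this lemma, so there is nothing to compare beyond noting that your construction agrees with theirs.
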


Now we bound the size of the monomial coefficients in $p_{r,d}(x)$, as these will govern the precision with which we need to estimate the trace of each monomial.
Following \cite[Chapter 3]{Sachdeva_Chebyshev}, the polynomial $p_{r,d}(x)$ is obtained by first approximating $x^r$ in the Chebyshev basis by  
\[
p_{r,d}(x)
= \alpha^{(r)}_0 + \sum_{i=1}^d 2\alpha^{(r)}_i T_i(x)
\]
where $T_i(x)$ is the $i$'th Chebyshev polynomial (of the first kind), and $\alpha^{(r)}_i = \binom{r}{(r-i)/2}/2^r$ if $i$ has the same parity as $r$, and $\alpha^{(r)}_i = 0$ otherwise.
We then obtain the desired coefficients of $p_{r,d}$ in the monomial basis by expressing each of the Chebyshev polynomials in the monomial basis.
Concretely, if
$T_i(x) = \sum_{\ell=0}^i c_\ell^{(i)} x^\ell$
then
\begin{align*}
    p_{r,d}(x)
&= \alpha^{(r)}_0 + \sum_{i=1}^d 2 \alpha^{(r)}_i T_i(x) \\
&= \alpha^{(r)}_0 + \sum_{\ell=0}^d \left[\sum_{i=\ell}^{d} 2\alpha^{(r)}_i c_\ell^{(i)}\right] x^{\ell}
\eqqcolon \sum_{\ell=0}^d b^{(r,d)}_\ell x^\ell.
\end{align*}
 
To bound the coefficients $b^{(r,d)}_\ell$, we first bound the coefficients $c_\ell^{(i)}$ in \Cref{lem:c-bound} below.
Combined with the bounds $\left|\alpha^{(r)}_i\right|\leq 1$ this lemma yields the upper bound $|b_\ell| \le (d+1) \cdot 2 \cdot 2^{2d}\le 2^{3d}$.

\begin{lem} \label{lem:c-bound}
    For all $i \in \N$ and $\ell \leq i$, we have $\left|c_\ell^{(i)}\right|\le 2^{2i}$.
\end{lem}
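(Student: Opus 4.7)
The plan is to derive a recursion on the coefficients $c_\ell^{(i)}$ from the standard Chebyshev three-term recurrence and then prove the bound by straightforward induction on $i$.

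First I would recall that the Chebyshev polynomials of the first kind satisfy $T_0(x)=1$, $T_1(x)=x$, and $T_{i+1}(x)=2xT_i(x)-T_{i-1}(x)$. Reading off the coefficient of $x^\ell$ on both sides of this recurrence immediately yields the coefficient recursion
\[
c_\ell^{(i+1)} \;=\; 2\, c_{\ell-1}^{(i)} \;-\; c_\ell^{(i-1)},
\]
valid for all $i\ge 1$ and $0\le \ell \le i+1$ (with the convention that $c_\ell^{(i)}=0$ whenever $\ell<0$ or $\ell>i$).

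Next I would define $M_i = \max_{0\le \ell \le i} |c_\ell^{(i)}|$. The triangle inequality applied to the recursion gives
\[
M_{i+1} \;\le\; 2\,M_i + M_{i-1}.
\]
Now I would prove $M_i \le 4^{i} = 2^{2i}$ by strong induction on $i$. The base cases $M_0 = 1 \le 4^0$ and $M_1 = 1 \le 4^1$ are immediate from $T_0=1$ and $T_1=x$. For the inductive step, assuming the bound holds at indices $i-1$ and $i$,
\[
M_{i+1} \;\le\; 2\cdot 4^i + 4^{i-1} \;=\; 9\cdot 4^{i-1} \;\le\; 16\cdot 4^{i-1} \;=\; 4^{i+1},
\]
which closes the induction and yields $|c_\ell^{(i)}| \le M_i \le 2^{2i}$ for every $\ell \le i$, as required.

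There is no real obstacle here: the only mild choice is picking a clean exponential envelope that absorbs the recursion $M_{i+1}\le 2M_i+M_{i-1}$, and $4^i$ works because the characteristic root $1+\sqrt{2}$ of the recursion is comfortably below $4$. One could alternatively invoke the known closed form $T_i(x)=\tfrac{i}{2}\sum_{\ell=0}^{\lfloor i/2\rfloor}(-1)^\ell\frac{(i-\ell-1)!}{\ell!(i-2\ell)!}(2x)^{i-2\ell}$ and bound the coefficients directly by $\binom{i}{\ell}2^i \le 2^{2i}$, but the inductive route above is cleaner and avoids any factorial arithmetic.
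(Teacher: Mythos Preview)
Your proof is correct and takes essentially the same approach as the paper: both derive the coefficient recursion $c_\ell^{(i+1)} = 2c_{\ell-1}^{(i)} - c_\ell^{(i-1)}$ from the Chebyshev three-term recurrence and close an induction on $i$ using the same arithmetic (the paper bounds $2\cdot 2^{2(i-1)} + 2^{2(i-2)} \le 2^{2i}$ directly, whereas you package it via $M_i$, but this is purely cosmetic).
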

\begin{proof}
We prove the lemma using induction on $i$. The Chebyshev polynomials can be defined via the following recurrence
$$T_i(x)=2xT_{i-1}(x)-T_{i-2}(x),$$
where $T_0(x)=1$, $T_1(x)=x$. 
    This immediately shows that the bounds $\left|c_{\ell}^{(i)}\right| \leq 2^{2i}$ hold for $i=0$ and $i=1$. Now assume $i>1$ and that for all $i'<i$ and $\ell \leq i'$, we have $\left|c_\ell^{(i')}\right|\le 2^{2i'}$. Then from the recursion $T_i(x)=2xT_{i-1}(x)-T_{i-2}(x)$, we obtain $c_\ell^{(i)}=2c_{\ell-1}^{(i-1)}-c_{\ell}^{(i-2)}$ and hence $\left|c_\ell^{(i)}\right| \le 2^{2(i-1)+1}+2^{2(i-2)}\le 2\cdot 2^{2(i-1)+1}= 2^{2i}$.
\end{proof}

Now we can describe an efficient algorithm that, for any $\eps>0$, outputs an additive $\eps$-estimate of $\beta_k/d_k$ with high probability.
The correctness and complexity of the algorithm are proven in \Cref{th:algo}.

\begin{algorithm}[ht!]
    \caption{Algorithm for $\eps$-estimating $\beta_k/d_k$}
    \label{alg:cheb}
    \KwIn{Query and sample access to complex $K$, integer $k$, estimates of $\hat\lambda$ and $\gamma$, precision parameter $\eps \in (0,1)$.}
    \KwOut{Estimate $\tilde\nu_k$ such that $\tilde\nu_k = \beta_k/d_k \pm \eps$ with high probability.}
    \SetAlgoLined
    \SetKw{Continue}{continue}
    
    \vspace{6pt}
    
    Set $r = \left\lceil \frac{1}{\gamma} \log\frac{3}{\eps} \right\rceil$ and $d = \left\lceil \sqrt{\frac{2}{\gamma}} \log\frac{6}{\eps} \right\rceil$.
 
    \For{$\ell = 0,\dots,d$}{
    Estimate $\Tr(H^\ell)/d_k$ to additive precision $\delta = \eps/\left(3 (d+1) 2^{3d}\right)$ with high probability using Algorithm \ref{alg:simple}. Let $\text{est}_{k,\ell}$ denote the output.
    
  }
        
        Return $\tilde\nu_k = \sum_{\ell=0}^{d} b^{(r,d)}_\ell \text{est}_{k,\ell}$.\
    
\end{algorithm}

\begin{theo} \label{th:algo}
   Algorithm \ref{alg:cheb} returns with high probability an estimate of $\beta_k/d_k$ with additive error~$\eps$ in time~$n^{\Oh\left(\frac{1}{\sqrt{\gamma}} \log\frac{1}{\eps}\right)}$.
\end{theo}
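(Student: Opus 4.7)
The plan is to decompose the error $|\tilde\nu_k - \beta_k/d_k|$ into three contributions and bound each by $\eps/3$, and then to verify that the choice of parameters $r$ and $d$ keeps the dominant cost at $n^{\Oh(1/\sqrt{\gamma} \log(1/\eps))}$.

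First I would analyze correctness via a three-way triangle inequality:
\begin{align*}
    \left| \tilde\nu_k - \frac{\beta_k}{d_k} \right|
    &\leq \underbrace{\left| \tilde\nu_k - \frac{\Tr(p_{r,d}(H))}{d_k} \right|}_{(A)}
    + \underbrace{\left| \frac{\Tr(p_{r,d}(H))}{d_k} - \frac{\Tr(H^r)}{d_k} \right|}_{(B)}
    + \underbrace{\left| \frac{\Tr(H^r)}{d_k} - \frac{\beta_k}{d_k} \right|}_{(C)}.
\end{align*}
Term $(C)$ is bounded by $\eps/3$ directly from \Cref{obs:tr-bk}, since $r \geq \frac{1}{\gamma} \log\frac{3}{\eps}$. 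Term $(B)$ is bounded by using that the eigenvalues of $H$ lie in $[0,1] \subseteq [-1,1]$, together with \Cref{lem:cheb-approx}: eigenvalue-wise, $|p_{r,d}(\lambda) - \lambda^r| \leq \eps/3$, so the same bound holds for the normalized trace, which requires $d \geq \sqrt{2r \log(6/\eps)}$ and this is satisfied by the choice of $d$ in the algorithm. Term $(A)$ is the Monte Carlo contribution: writing $\tilde\nu_k = \sum_\ell b^{(r,d)}_\ell \text{est}_{k,\ell}$ and $\Tr(p_{r,d}(H))/d_k = \sum_\ell b^{(r,d)}_\ell \Tr(H^\ell)/d_k$, and using the already-established bound $\sum_\ell |b^{(r,d)}_\ell| \leq (d+1) 2^{3d}$, we obtain $(A) \leq (d+1) 2^{3d} \cdot \delta = \eps/3$ by the choice of $\delta$.

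Next I would handle complexity. The algorithm makes $d+1$ calls to Algorithm \ref{alg:simple}, each of which runs in time $\Oh(n^{4\ell}/\delta^2) \cdot \ell \cdot \poly(n)$ by \Cref{lem:Hoeffding2} and \Cref{lemma_sample}. The dominant call is at $\ell = d$; substituting $1/\delta^2 = \Oh\bigl(d^2 2^{6d}/\eps^2\bigr)$ gives a runtime of $n^{4d} \cdot 2^{\Oh(d)} \cdot \poly(n, d, 1/\eps)$. Since $d = \Oh\bigl(\frac{1}{\sqrt{\gamma}} \log\frac{1}{\eps}\bigr)$, this collapses to $n^{\Oh(\frac{1}{\sqrt{\gamma}} \log\frac{1}{\eps})}$, and the extra factor $d+1$ from the number of calls is absorbed. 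To ensure the overall high-probability guarantee, I would run each inner Hoeffding estimator with failure probability $1/(3(d+1))$ (which only scales the sample count by a logarithmic factor) and then take a union bound over the $d+1$ calls.

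The only mildly delicate step is verifying that the choice $d = \lceil \sqrt{2/\gamma} \log(6/\eps) \rceil$ is indeed at least $\sqrt{2r \log(6/\eps)}$ with $r = \lceil \frac{1}{\gamma} \log(3/\eps) \rceil$; squaring both quantities reduces this to the inequality $\log(6/\eps) \geq \log(3/\eps)$, so it holds. Everything else is bookkeeping, and no step looks like a real obstacle.
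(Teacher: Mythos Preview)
Your proposal is correct and follows essentially the same approach as the paper: the same three-term error decomposition (power-to-Betti via \Cref{obs:tr-bk}, Chebyshev truncation via \Cref{lem:cheb-approx}, Monte Carlo error via the coefficient bound $|b^{(r,d)}_\ell|\le 2^{3d}$), and the same complexity accounting driven by the $\ell=d$ call to Algorithm~\ref{alg:simple}. If anything, you are slightly more explicit than the paper in two places---the union bound over the $d+1$ inner calls and the sanity check that the algorithm's choice of $d$ meets the degree requirement of \Cref{lem:cheb-approx}---but these are refinements of the same argument, not a different route.
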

\begin{proof}
First we prove the correctness.
By our choice of $r$, we know from \Cref{obs:tr-bk} that $\Tr(H^r)/d_k = \beta_k/d_k \pm \eps/3$, so it suffices to return an $(2\eps/3)$-additive estimate of $\Tr(H^r)/d_k$.
By \Cref{lem:cheb-approx} we can use the approximation
\[
\frac{1}{d_k} \Tr(H^r)
= \frac{1}{d_k} \sum_{\ell=0}^d b^{(r,d)}_\ell \Tr(H^\ell) \pm \eps/3
\]
for $d = \left\lceil \sqrt{2 r \log \frac{6}{\eps}} \right\rceil \leq \left\lceil \sqrt{\frac{2}{\gamma}} \log \frac{6}{\eps} \right\rceil$.
We estimate each term $\Tr(H^\ell)/d_k$ to precision $\delta = \frac{\eps}{3 (d+1) 2^{3d}}$ with high probability, so that the final estimator has a total error
\begin{align*}
    \tilde\nu_k
&= \sum_{\ell=0}^{d} b^{(r,d)}_\ell \text{est}_{k,\ell}
= \sum_{\ell=0}^{d} b^{(r,d)}_\ell \left( \Tr(H^\ell)/d_k \pm \delta \right) \\
&= \left( \frac{1}{d_k} \sum_{\ell=0}^{d} b^{(r,d)}_\ell \Tr(H^\ell) \right) \pm \eps/3,
\end{align*}
using that $\left| \sum_{\ell=0}^{d} b^{(r,d)}_\ell \delta \right| \le \delta (d+1) 2^{3d} \le \eps/3$.
Combined with the previous error bounds this shows that~$\tilde\nu_k = \beta_k/d_k \pm \eps$ with high probability.

To bound the time complexity, recall that the time complexity of Algorithm \ref{alg:simple} in Line 3 is~$\Oh(n^{4\ell}/\delta^2) \in n^{\Oh(d)}/\eps^2$.
Summing over the $d+1$ loops, and using the expression for $d$, this yields a total time complexity that is $n^{\Oh\left(\frac{1}{\sqrt{\gamma}} \log \frac{1}{\eps} \right)}$.
\end{proof}

This completes the proof of the first item of \Cref{thm:main}.

\section{Algorithm for clique complexes \label{sec:clique}}

The complexity of our path integral Monte Carlo algorithm is dominated by the sample complexity that follows from Hoeffding's inequality (\Cref{lem:hoeffdingineq}), which we bound using the fact that $|Y_z| \leq \|H\|_1^z$ and $\|H\|_1 = \poly(n)$.
Here we prove a tighter bound on $\|H\|_1$ for the special case of clique complexes, and exploit this to improve the algorithm.

We will use the following characterization of the off-diagonal elements of the combinatorial Laplacian $\Dk$.
\begin{lem}[Follows from \Cref{lem:laplacemat}] \label{lemma:nonzero_laplacian}
    Let $\Dk$ denote the $k$-th combinatorial Laplacian of a simplicial complex $K$.
    Then $(\Dk)_{ij}$ for $i\neq j$ is nonzero if and only if the corresponding two $k$-faces are down-up neighbors but not up-down neighbors.
\end{lem}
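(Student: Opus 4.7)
The plan is to observe that the claimed equivalence is essentially a direct rephrasing of the off-diagonal case analysis in \Cref{lem:laplacemat}, so no real work is required beyond invoking that lemma and reading off the two implications.

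For the forward direction, suppose $i \neq j$ and the two corresponding $k$-faces are down-up neighbors but not up-down neighbors. Then the second bullet of \Cref{lem:laplacemat} applies and gives $(\Dk)_{ij} = \pm 1$, which is in particular nonzero.

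For the converse, I would argue by contrapositive: assume $i \neq j$ and that either the two $k$-faces fail to be down-up neighbors or they are up-down neighbors. Then we are precisely in the ``otherwise'' case of the third bullet of \Cref{lem:laplacemat}, which yields $(\Dk)_{ij} = 0$. Hence whenever $(\Dk)_{ij} \neq 0$ for $i \neq j$, the faces must be down-up neighbors and not up-down neighbors.

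There is no meaningful obstacle here; the two bullets of \Cref{lem:laplacemat} partition the off-diagonal cases according to exactly the predicate stated in the lemma, so combining them gives the biconditional immediately.
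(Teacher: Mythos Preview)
Your proposal is correct and matches the paper's own treatment: the paper does not give a separate proof but simply notes in the lemma heading that the statement follows from \Cref{lem:laplacemat}, which is exactly the reduction you spell out. Your explicit forward/contrapositive reading of the second and third bullets is the natural way to unpack that reference.
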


The following claim is going to be useful for the proof of the next lemma. 
\begin{cl} \label{claim:neighbors}
        In a clique complex, every $k$-face has at most $n-k-1$ down-up neighbors that are not its up-down neighbors.
    \end{cl}
    
    \begin{proof}
        Since we are in the clique complex case, a $k$-face is exactly a $(k+1)$-clique, so we will use the two expressions interchangeably. We will prove a slightly stronger statement: every vertex that is not in a $(k+1)$-clique $C$ can appear in at most one down-up neighbor of $C$ that is not its up-down neighbor. Let us consider a $(k+1)$-clique $C$ and suppose there is a vertex $v$ among the $n-k-1$ vertices that are not in $C$ such that $v$ belongs to two distinct down-up neighbors $C_1$ and $C_2$ of~$C$. That is, suppose there are two distinct vertices $u_1,u_2 \in C$ such that $C_1 = C \setminus \{u_1\} \cup \{v\}$ and $C_2 = C \setminus \{u_2\} \cup \{v\}$ are $(k+1)$-cliques. We show that $C_1$ and $C_2$ are up-down neighbors of $C$. Indeed, $v$ must be adjacent to every vertex of $C$: from $C_1 \in K$ it is adjacent to all vertices in $C$ other than the vertex $u_1$, and from $C_2 \in K$ it is adjacent all vertices except for $u_2$.
        So $C \cup \{v\}$ forms a $(k+2)$-clique and hence $C_1$ and $C_2$ are up-down neighbors of $C$. 
    \end{proof}

This section's main observation is the following.
\begin{lem} \label{lem_clique_nonzeros}
    The $k$-th combinatorial Laplacian of a clique complex has at most $n-k-d^{\text{up}}_i$ nonzero entries in every row, that is, 
    $$|\{j : (\Dk)_{ij}\neq0\}|\leq n-k-d^{\text{up}}_i \quad \forall i\in d_k$$
    where $d^{\text{up}}_i$ is the up-degree of the $k$-face corresponding to the $i$-th row of the combinatorial Laplacian.
\end{lem}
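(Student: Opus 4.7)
The plan is to fix a $k$-face $i = \{v_1, \dots, v_{k+1}\}$ and count the indices $j$ for which $(\Dk)_{ij} \neq 0$. The diagonal entry $(\Dk)_{ii}$ always contributes a single nonzero, so I would focus on bounding the number of nonzero off-diagonal entries by $n - k - 1 - d^{\text{up}}_i$.

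By \Cref{lemma:nonzero_laplacian}, the off-diagonal entries that can be nonzero correspond to $k$-faces $j$ with $|i \triangle j| = 2$, i.e., $j = (i \setminus \{v_\ell\}) \cup \{w\}$ for some $\ell \in [k+1]$ and some $w \in [n] \setminus i$. The key step is to parameterize these candidate neighbors by the new vertex $w$ rather than by the pair $(\ell, w)$, and then, for each $w \in [n] \setminus i$, analyze jointly (a) which $j$'s are actually faces of the clique complex and (b) whether $i \cup j$ is a $(k+1)$-face. Specifically, since $i$ is a clique, for a fixed $w$ the face $j = (i \setminus \{v_\ell\}) \cup \{w\}$ lies in $K$ iff $w$ is adjacent to every vertex of $i$ except possibly $v_\ell$; and $i \cup j = i \cup \{w\}$ is a $(k+1)$-face iff $w$ is adjacent to every vertex of $i$.

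From this I would split the $n - k - 1$ values of $w$ into three classes according to $S_w := \{v \in i : w \not\sim v\}$:
\begin{itemize}
    \item $|S_w| = 0$: here $i \cup \{w\}$ is a $(k+1)$-face, so every such $w$ corresponds to an up-down neighbor and contributes zero off-diagonal entries. By the characterization of up-degree, the number of such $w$'s equals $d^{\text{up}}_i$.
    \item $|S_w| = 1$, say $S_w = \{v_\ell\}$: then the \emph{only} $k$-face of the form $(i \setminus \{v_m\}) \cup \{w\}$ in $K$ is the one with $m = \ell$, and it is not an up-down neighbor, so $w$ contributes at most one nonzero off-diagonal entry.
    \item $|S_w| \geq 2$: no $j$ of the above form lies in $K$, contributing zero.
\end{itemize}

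Summing, the number of nonzero off-diagonal entries is at most $(n - k - 1) - d^{\text{up}}_i$, and adding the diagonal gives the claimed bound $n - k - d^{\text{up}}_i$. The main subtlety is the clean matching between the vertices $w$ with $|S_w| = 0$ and the $(k+1)$-faces counted by $d^{\text{up}}_i$, which is exactly the observation that in a clique complex a face extends uniquely by each common neighbor of its vertices; this is what distinguishes the clique case from general complexes and yields the sharper sparsity bound.
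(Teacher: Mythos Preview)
Your proposal is correct and follows essentially the same approach as the paper: both parameterize the potential down-up neighbors of a $k$-face $i$ by the new vertex $w \in [n]\setminus i$ and argue that each such $w$ contributes at most one nonzero off-diagonal entry, with zero contribution precisely when $w$ is a common neighbor of all of $i$ (the $d^{\text{up}}_i$ many up-neighbors). Your case split on $|S_w|$ packages the paper's contradiction argument (\Cref{claim:neighbors}) and its subsequent observation about up-neighbors into a single direct analysis, but the underlying idea is identical.
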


\begin{proof}
    As a consequence of \Cref{claim:neighbors}, every $k$-face has at most $n-k-1$ down-up neighbors that are not its up-down neighbors. Following \Cref{lemma:nonzero_laplacian}, these elements correspond exactly to the nonzero off-diagonal entries in $\Dk$. Adding the diagonal element $(\Dk)_{ii}$, we obtain that the total number of nonzero entries in a row of the $k$-th combinatorial Laplacian is at most $n-k$.
    
    To improve this bound, notice that if a vertex $v$ is adjacent to all the vertices of a $k$-face (i.e, we have an up-neighboring $(k+1)$\nobreakdash-face), then $v$ cannot be in any down-up neighbor that is not an up-down neighbor as well. Thus, using \Cref{lemma:nonzero_laplacian} again, we can say that every up-neighbor ``cancels'' the corresponding down-up neighbors in $\Dk$.
    
 Hence, if the $k$-face corresponding to the $i$-th row of the combinatorial Laplacian has up-degree~$d^{\text{up}}_i$, then the number of nonzero entries in the $i$-th row of $\Dk$ is at most $n-k-d^{\text{up}}_i$. 
\end{proof}

From this, we get the following corollary.

\begin{coro} \label{obs_Hleq2}
    $\|H\|_1 \leq 2n/\hat\lambda$.
\end{coro}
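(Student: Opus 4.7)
The plan is to directly bound the $\ell_1$ operator norm of $H = I - \Delta_k/\hat\lambda$ row-by-row, leveraging the improved sparsity bound from Lemma~\ref{lem_clique_nonzeros}. Since $\Delta_k$ is symmetric, so is $H$, hence $\|H\|_1 = \max_j \sum_i |H_{ij}|$ equals the maximum absolute row sum. So I will fix a row $i$ of $H$ and bound $\sum_j |H_{ij}|$, then take the maximum over $i$.

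For the diagonal contribution, Lemma~\ref{lem:laplacemat} gives $(\Delta_k)_{ii} = d^{\text{up}}_i + k + 1$, so $H_{ii} = 1 - (d^{\text{up}}_i + k + 1)/\hat\lambda$. Since $\Delta_k$ is PSD we have $(\Delta_k)_{ii} \leq \lambda_{\max}(\Delta_k) \leq \hat\lambda$, so $H_{ii} \geq 0$ and $|H_{ii}| = 1 - (d^{\text{up}}_i + k + 1)/\hat\lambda$. For the off-diagonal contribution, each nonzero entry of $\Delta_k$ is $\pm 1$, so each corresponding entry of $H$ has absolute value $1/\hat\lambda$. By Lemma~\ref{lem_clique_nonzeros}, row $i$ has at most $n - k - d^{\text{up}}_i$ nonzero entries in total, hence at most $n - k - d^{\text{up}}_i - 1$ nonzero off-diagonal entries.

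Putting these together, the $i$-th row sum satisfies
\[
\sum_{j} |H_{ij}|
\;\leq\; 1 - \frac{d^{\text{up}}_i + k + 1}{\hat\lambda} + \frac{n - k - d^{\text{up}}_i - 1}{\hat\lambda}
\;=\; 1 + \frac{n - 2(k+1) - 2 d^{\text{up}}_i}{\hat\lambda}
\;\leq\; 1 + \frac{n}{\hat\lambda}.
\]
Finally, since $\hat\lambda \leq n$ (after possibly capping at $n$, using $\lambda_{\max}(\Delta_k) \leq n$ from Lemma~\ref{lem:D-facts}), the constant $1$ is absorbed into $n/\hat\lambda$, giving $\|H\|_1 \leq 2n/\hat\lambda$.

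The only mild subtlety is the final step where one bounds $1 + n/\hat\lambda$ by $2n/\hat\lambda$; this relies on $\hat\lambda \leq n$, which holds because $\lambda_{\max}(\Delta_k) \leq n$ and we may assume the estimate $\hat\lambda$ does not exceed $n$ (otherwise we replace it by $n$, which is still an upper bound of the correct order since $\lambda_{\max}(\Delta_k) \leq \hat\lambda \leq n$). I expect this is the only place one might need to be slightly careful about constants; the rest of the argument is an essentially direct computation combining the characterization of diagonal entries from Lemma~\ref{lem:laplacemat} with the sparsity bound from Lemma~\ref{lem_clique_nonzeros}, and the cancellation of the $d^{\text{up}}_i$ terms is what makes the bound independent of the up-degree.
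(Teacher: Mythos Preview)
Your proof is correct and follows essentially the same approach as the paper: split each column (row) of $H$ into its diagonal entry and the off-diagonal entries, bound the former using $(\Delta_k)_{ii}\le \hat\lambda$ (the paper uses the cruder $(\Delta_k)_{ii}\le n$) and the latter using the clique-complex sparsity bound, and then combine using $\hat\lambda \le n$. If anything, your argument is slightly sharper in that you invoke the full $n-k-d^{\text{up}}_i$ bound from Lemma~\ref{lem_clique_nonzeros}, whereas the paper only uses the weaker consequence that there are at most $n$ nonzero off-diagonals per row; both routes arrive at the same $2n/\hat\lambda$.
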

\begin{proof}
    Recall that $H= I - \Dk/\hat\lambda$. Thus, 
 \begin{align*}
     \|H\|_1 
    &= \max_j \sum_i |H_{ij}| \\
    &\leq \max_j \left|(I)_{jj} - \frac{(\Dk)_{jj}}{\hat\lambda}\right| + n\cdot \frac{1}{\hat\lambda}
    \leq 2 \frac{n}{\hat\lambda},
 \end{align*}
    where for the first inequality we used the fact that $|(\Dk)_{ij}|$ is either 1 or 0 if $i\neq j$, and by \Cref{lemma:nonzero_laplacian} it is 1 at most $n$ times in every row or column. In the second inequality we used the fact that $0\leq (\Dk)_{ii} \leq n$. Combining, we have the result.
\end{proof}

Now let us recall the path integral estimator $Y_z$ as defined in the previous section:
$$Y_z(j_0,j_1,\dots,j_z)
= \braket{j_0|j_z} \prod_{\ell=0}^{z-1} (-1)^{s(j_{\ell+1},j_\ell)} \|H_{\cdot,j_\ell}\|_1.$$
As a consequence of \Cref{obs_Hleq2}, it satisfies
\[
|Y_z|
\leq \|H\|_1^z
\leq \left(\frac{2n}{\hat\lambda}\right)^{z}.
\]
This improves the sample complexity in \Cref{lem:Hoeffding2} from $\Oh\left(n^{4z}\right)/\delta^2$ to $\Oh\left(\left(\frac{2n}{\hat\lambda}\right)^{2z}\right) \cdot \frac{1}{\delta^2}$ that is $\left(\frac{n}{\hat\lambda}\right)^{\Oh(z)} \cdot \frac{1}{\delta^2}$.
This directly propagates to Algorithm \ref{alg:cheb}, improving its time complexity from
\[
n^{\Oh\left(\frac{1}{\sqrt{\gamma}} \log\frac{1}{\eps}\right)}
\text{ to }
\left(\frac{n}{\hat\lambda}\right)^{\Oh\left(\frac{1}{\sqrt{\gamma}} \log\frac{1}{\eps}\right)}\cdot\poly(n).
\]
The $\poly(n)$ term comes from the time required for sampling (\Cref{lemma_sample}). This completes the proof of the second item of \Cref{thm:main}.

\section*{Acknowledgements}
This work benefited from discussions with Chris Cade, Marcos Crichigno, and Sevag Gharibian. Sayantan would like to thank Sophie Laplante for hosting him at IRIF, Paris, for an academic visit where this work was initiated.
Finally, we thank anonymous reviewers for pointing out useful references and ways to improve our presentation.

\bibliographystyle{plainnat}
\bibliography{bibliography}

\end{document}